\newtheorem{thm}{Theorem}
\newtheorem{corollary}{Corollary}
\def\BibTeX{{\rm B\kern-.05em{\sc i\kern-.025em b}\kern-.08em
		T\kern-.1667em\lower.7ex\hbox{E}\kern-.125emX}}
\begin{document}
	
	\title{On Decentralized Multi-Transmitter Coded Caching}
	
	\author{
		\IEEEauthorblockN{Mohammad Mahmoudi\IEEEauthorrefmark{1}, Mohammad Javad Sojdeh\IEEEauthorrefmark{1}, Seyed Pooya Shariatpanahi\IEEEauthorrefmark{1}}
		\IEEEauthorblockA{\IEEEauthorrefmark{1}School of Electrical and Computer Engineering, College of Engineering, University of Tehran, Tehran, Iran
			\\\{m.mahmoudi13, sojdehjavad1376, p.shariatpanahi\}@ut.ac.ir}
	}

\maketitle
	
\begin{abstract}
This paper investigates a setup consisting of multiple transmitters serving multiple cache-enabled clients through a linear network, which covers both wired and wireless transmission situations. We investigate decentralized coded caching scenarios in which there is either no cooperation or limited cooperation between the clients at the cache content placement phase.
		For the fully decentralized caching case (i.e., no cooperation) we analyze the performance of the system in terms of the \emph{Coding Delay} metric. Furthermore, we investigate a hybrid cache content placement scenario in which there are two groups of users with different cache content placement situations (i.e., limited cooperation). Also, we examine the effect of finite file size in above scenarios.
	\end{abstract}
	
\section{Introduction}
In recent years, due to the unprecedented growth in demand for multimedia services, the usage of distributed memories across the network has been proposed to replicate the requested content as a promising method to alleviate network bandwidth bottlenecks, which is also known as \emph{Caching} \cite{Bastug_2014}, \cite{Golrezaei_2013}. Specially, video delivery has emerged as the main driving factor to lead a steep increase in wireless traffic in modern networks. As an example of deploying caches in wireless networks, authors in \cite{FemtoCaching} propose a system where helpers with low-rate backhaul but high storage capacity store popular video files. In the delivery phase the available files at the helpers are locally delivered to requesting users, and the main base station only transmits files not available from helpers, which relieves the burden on network backhaul links.

Along another research line, the combined use of wireless edge caching and coded multicasting has been suggested to concurrently
serve multiple unicast demands via coded multicast transmissions. This proposal, also known as \emph{Coded Caching}, is a breakthrough idea in this paradigm which was first proposed in \cite{Maddah_2014}. In a few words, in this method the caches are first filled in network low traffic hours without knowing the actual demands of users. In peak traffic hours, after revealing users' requests, coded messages are multicast to groups of users, which complete the delivery of distinct contents of users. Following this paradigm, decentralized coded caching scheme based on independent random content placement was introduced by Maddah-Ali and Niesen \cite{Maddah_2015}.  It was shown that, in the large file size regime, the multicasting gain is almost the same as the centralized caching scenario for large networks. The finite file size regime was considered in \cite{Shanmugam-2016}. Also, in \cite{S. Jin-2016} the authors proposed a caching scheme that outperforms the scheme in \cite{Maddah_2015} when the file size is not large.

In order to boost the performance of coded caching, many works have proposed using multiple transmitters/antennas, such as \cite{Shariatpanahi_2016}, \cite{Naderializdeh_2017}, and \cite{Lampiris_2018}. The main design in such works is to benefit from a combined multi-antenna multiplexing/global coded caching gain, by carefully borrowing ideas from the Zero-Forcing (ZF) techniques. It is shown that in such a scenario, one can have an additive gain from both coded caching and multi-antenna multiplexing domains. Further research have explored finite $\mathrm{SNR}$ performance analysis of multi-antenna coded caching in \cite{Shariatpanahi_2019}, and \cite{Tolli_2020}, which suggest similar gains.

	In this paper, we investigate a coded caching scenario with multiple transmitters (such as above works), but with a decentralized cache content placement phase assumption. The decentralized setting has been first investigated in a single transmitter setup in \cite{Maddah_2015} (and many follow up works), which  we extend to the multi-transmitter setup. In order to do this, we first derive the so called \emph{Coding Delay} of a content delivery algorithm which combines the decentralized content delivery algorithm in \cite{Maddah_2015} and the ZF-based delivery algorithm in \cite{Shariatpanahi_2016}, and discuss the insights it provides. Then, we consider a heterogeneous setup in which there are two groups of users at the placement phase. The first group can coordinate their caching phase and follow a centralized coded caching setup, while the users in the second group follow the fully decentralized caching scenario. At the delivery phase the groups are merged and the transmitters should fulfill all the demands. For this setup we analyze two delivery strategies. The first strategy treats the users in a TDMA delivery  fashion, while the second strategy treats them altogether. Furthermore, we investigate the effect of limited subpacketization in such scenarios and see how fast the performance converges to the infinite file size case, as a function of allowed file size.
	
	The most relevant work to ours is \cite{Thomdapuand_2019} which considers a similar problem in the wireless context, but with focus on finite SNR analysis and beamformer design for the multi-antenna transmitter. Since here we consider Coding Delay (which is equivalent to high $\mathrm{SNR}$ analysis of the wireless setup) we arrive at a much simpler closed-form performance expression which helps to clearly observe the role of multiple transmitters in the decentralized setting. Furthermore, in this work we consider a hybrid cache content placement scenario which has not been considered in  \cite{Thomdapuand_2019}. Moreover, here we analyze the finite file size regime as well.
	
	Finally, let us review some notations used in this paper. We use lower case bold-face symbols to represent vertical vectors, upper case bold-face symbols to represent matrices, and \emph{mathcal} symbols to show sets. For any matrix $\mathbf{A}$ (vector $\mathbf{a}$), matrix ${\mathbf{A}^{t}}$ (vector ${\mathbf{a}^{t}}$) denotes the transpose of matrix $\mathbf{A}$ (vector $\mathbf{a}$). For vector $\mathbf{a}$, ${\mathbf{a}^{\bot }}$ illustrates the condition ${\mathbf{a}^{t}}{\mathbf{a}^{\bot }}={{({\mathbf{a}^{\bot }})}^{t}}\mathbf{a}=0$ is satisfied. For any two sets $\mathcal{B}_{1}$ and $\mathcal{B}_{2}$ the set $\mathcal{B}_{1}\backslash \mathcal{B}_{2}$ contains those elements of $\mathcal{B}_{1}$ absent in $\mathcal{B}_{2}$. Also, we define $[{K}]=\{{1},...,{K}\}$ and $\mathbb{N}$ to be the set of natural numbers. Moreover, ${{\mathbb{F}}_{q}}$ shows a finite field with $q$ elements, and $\mathbb{F}_{q}^{a\times b}$ the set of all ${a}\tiny{-}{by}\tiny{-}{b}$ matrices whose elements belong to ${{\mathbb{F}}_{q}}$. In addition, let ${{y}_{1}},...,{{y}_{m}}\in {{\mathbb{F}}_{q}}$, then $\varphi{({y}_{1}},...,{{y}_{m})}$ is a random linear combination of ${{y}_{1}},...,{{y}_{m}}$ where the random coefficients are uniformly chosen from ${{\mathbb{F}}_{q}}$, and $\oplus$ represents addition in the corresponding finite field.

	\section{System Model and Assumption}
	In this paper, we consider a content delivery scenario where $L$ transmitters are connected to $K$ users via a \emph{Linear Network}, as shown in Figure \ref{fig:System-Model}. All the transmitters have full access to a contents library of $N$ files, i.e., $\mathcal{W}=\{W_1,\ldots,W_N\}$, in which the size of each file is $F$ bits. Also, each user is equipped with a cache of size $MF$ bits. We represent data by $m$-bit symbols as members of a finite field $\mathbb{F}_q$, where $q=2^m$.
	
	\begin{figure}[h]
		\includegraphics[width=8cm]{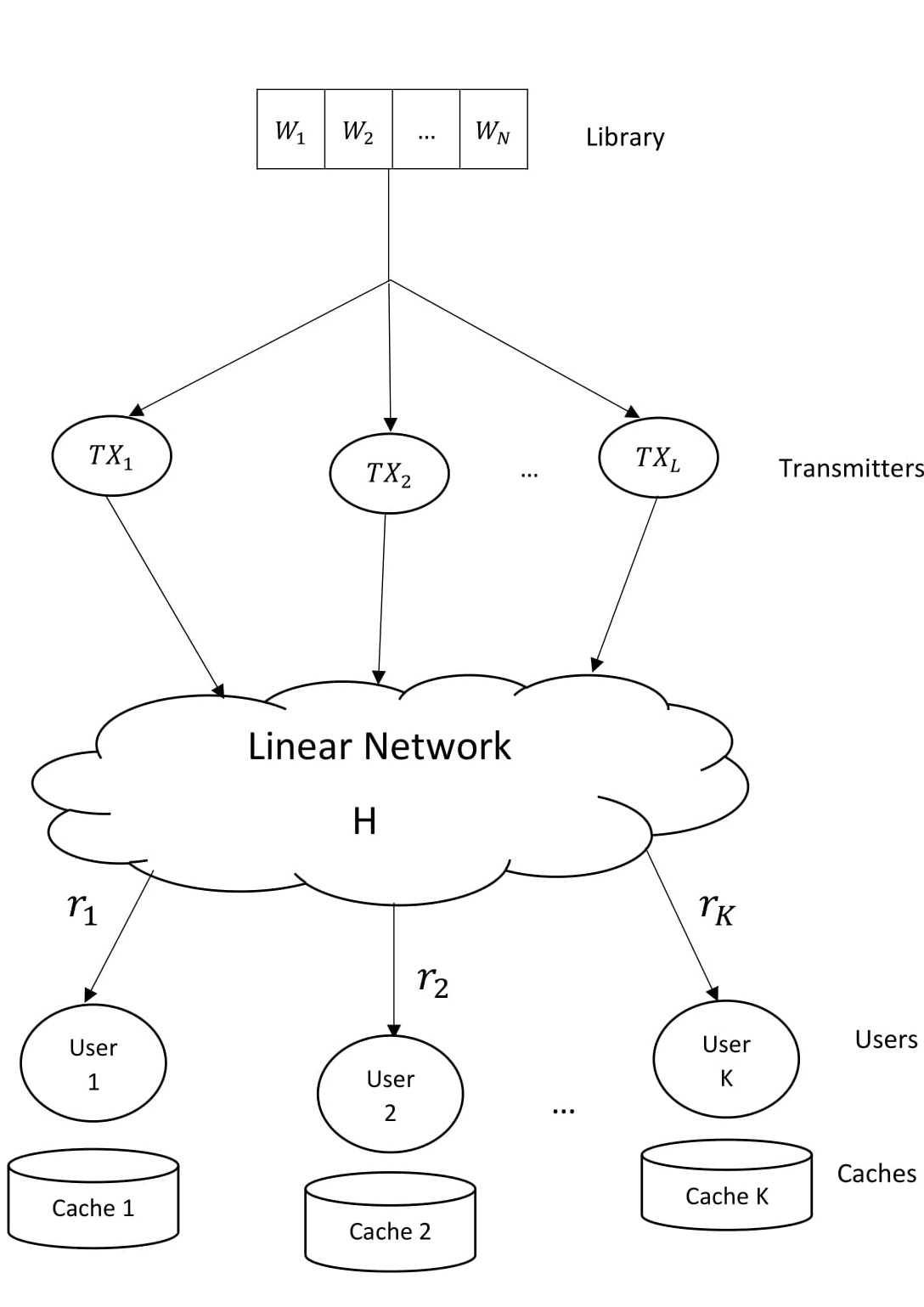}
		\caption{System Model: $L$ transmitters are connected to $K$ users via a linear network with transfer matrix $\mathbf{H}$.
			\label{fig:System-Model}}
	\end{figure}
	
	The system experiences two different traffic conditions, namely low-peak and high-peak hours, and thus operates in two phases. In the first phase, which we call the \emph{Cache Content Placement} phase and happens during network low-peak hours, each user caches $MF$ bits of data, which can be any function of the library $\mathcal{W}$,  without prior knowledge of their actual requests. If the users cooperate in the first phase we call it a centralized coded caching scenario, otherwise it is called a decentralized coded caching setup.
	
	In the second phase, which we call the \emph{Content Delivery} phase and happens during network high-peak hours, each user reveals a single content request, where the request of $i$-th user is denoted by $W_{d_i}$, and therefore the requested contents can be represented by the vector $\mathbf{d}=\{d_1,...,d_K\}$. In order to fulfill users' demands, the transmitters employ the linear network to deliver remaining portions of files not cached at the users. For the sake of presentation simplicity we assume $N>K$, while, extending the results to the other cases is straightforward.
	
	We assume an slotted time model for the delivery phase, where each slot represents a single channel use. More specifically, at the time slot $t=1,2,\ldots$, transmitter $i$ sends the symbol $s_i(t) \in \mathbb{F}_q$, which can be an arbitrary function of $\mathcal{W}$, and depends on the users requests $\mathbf{d}$, where $i=1,\ldots,L$. Also, the linear network connecting transmitters and users is assumed to be an \emph{error-free} \emph{zero-delay} network, and thus after injecting data from the transmitters, user $k$ receives $r_k(t)  \in \mathbb{F}_q$ for all $k=1,\ldots,K$. Due to the linear property of the channel we have the following relation between the symbols sent by the transmitter, and the symbols received by the users
	\begin{align}
		\mathbf{r}(t)=\mathbf{H}\mathbf{s}(t),
	\end{align}
	where
	\begin{center}
		$\mathbf{s}(t)\triangleq\begin{pmatrix}
			{{s}_{1}}(t) \\ 
			\vdots  \\ 
			{{s}_{L}}(t) \\ 
		\end{pmatrix}$
		\hspace{3mm} ,\hspace{3mm}
		$\mathbf{r}(t)\triangleq\begin{pmatrix}
			{{r}_{1}}(t) \\ 
			\vdots  \\ 
			{{r}_{K}}(t) \\ 
		\end{pmatrix},$ \\	
	\end{center}
	where $\mathbf{H} \in \mathbb{F}_q^{K\times L}$ represents the linear transformation imposed by the channel, which is assumed to be static during the delivery phase. For simplicity, we assume that the elements of $\mathbf{H}$ are uniformly at random chosen from the field $\mathbb{F}_q$, and the field size is large enough to fulfill all independence requirements used throughout the paper.
	
	The Delivery phase's primary purpose is to reduce the number of time slots required to respond to user requests. Let us define the number of time slots (channel uses) required to satisfy a specific users' demand vector $\mathbf{d}$ by $T_C({\mathbf{d}})$. Then, the \emph{Coding Delay}, which is our main performance metric in this paper, is defined to be 
	\begin{align}
		T_C=\max_{\mathbf{d}} T_C({\mathbf{d}}).
	\end{align}
	
	It should be noted that the \emph{linear network} model considered in this paper is very general and accommodates wired and wireless content delivery scenarios. For example, assuming that the transmitters and receivers are connected via a wired network modeled by a Directed Acyclic Graph (DAG), and intermediate nodes perform a low complexity topology-oblivious Random Linear Network Coding (RLNC) scheme, wired content delivery scenarios can be translated to this linear network model (see e.g., \cite{Shariatpanahi_2016}). On the other hand, if we assume a wireless Multiple-Input Single-Output Broadcast (MISO-BC) setup, the high SNR Degrees-of-Freedom (DoF) analysis can be translated to the coding delay analysis used in this paper (see e.g., \cite{Shariatpanahi_2019}).
	
	\section{Decentralized Multi – Transmitter Coded Caching}
	
	This section will show how adding transmitters can reduce the coding delay defined in the previous section in the decentralized coded caching setup. To do this, we adapt the zero-forcing based transmission scheme introduced in \cite{Shariatpanahi_2016} to the decentralized setup introduced by \cite{Maddah_2015} and analyze its performance in terms of coding delay.  The introduced gain by the proposed algorithm is achieved by taking advantage of the possibility of vector, and parallel transmissions for a decentralized coded caching system, known as a decentralized multi-transmitter coded caching algorithm.
	
	The primary approach is presented in Algorithm \ref{alg:DMCC} which provides the details of the cache content placement and content delivery phases.

	In Algorithm \ref{alg:DMCC}, $W_{{{d}_{r}},\mathcal{U}\backslash \{r\}}^{{{c}_{u}}}$ represents the piece of required file for user $r$ which is stored in the cache of all members of $\mathcal{U}$ except user $r$, and the $\varphi $  operator presents the linear combinations with random coefficients from the file components. Finally, ${{G}_{\omega }}(\mathcal{U})$ represents the linear combination of the file components whose coefficients are generated by the $\varphi$ operator.
	
In Theorem \ref{Theorem_Performance} we derive the performance of Algorithm \ref{alg:DMCC} in terms of coding delay, which clearly illustrates the benefits of using multiple transmitters in this setting.
	\begin{thm}\label{Theorem_Performance}
		For a decentralized multi-transmitter coded caching setup,  Algorithm \ref{alg:DMCC} achieves the following coding delay :
		\begin{center}
			\begin{align}
				\nonumber T_C&=\\
				&\sum_{\alpha =1}^{K}\sum_{ \tau \subseteq [K] , \lvert \tau\rvert =1}^{\dbinom{k}{\min(\alpha +L -1 , k }}  
				\dfrac{\begin{pmatrix}
						k-1 \\ 
						\alpha -1 \\ 
				\end{pmatrix}}{\binom{K-\alpha}{\min(\alpha +L-1,K) -\alpha }}X_{\alpha , \tau } ,
				\label{Eq_Main_CodingDelay_finite}
			\end{align}
		\end{center}
where $ X_{\alpha , \tau } $ is a random variable  with the distribution which we approximate by the Gamma distribution as follows: \\
	\begin{align}
		f(x;k;\theta ) = 
		{{x}^{k-1}}\frac{{{e}^{{}^{-x}/{}_{\theta }}}}{{{\theta }^{k}}\Gamma (k)}   \hspace{3mm}& \mbox{for }\hspace{1.5mm} x , k , \theta > 0 .
	\end{align}
Here, $\theta$ is a parameter estimated numerically.
	\end{thm}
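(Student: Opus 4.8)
The plan is to follow Algorithm \ref{alg:DMCC} phase by phase: first pin down the \emph{random} sizes of the subfiles produced by the placement step, then count how many zero-forcing-aided transmissions the delivery step issues and how long each lasts, and finally assemble these contributions into the claimed double sum. Fixing the placement phase first, under the decentralized rule of \cite{Maddah_2015} each user stores every bit of every file independently with probability $p=M/N$, so that each file $W_n$ is partitioned into subfiles $W_{n,\mathcal{S}}$ for $\mathcal{S}\subseteq[K]$, where $W_{n,\mathcal{S}}$ is the part cached by exactly the users in $\mathcal{S}$. The crucial observation is that $|W_{n,\mathcal{S}}|$ is a $\mathrm{Binomial}(F,\,p^{|\mathcal{S}|}(1-p)^{K-|\mathcal{S}|})$ random variable; these random lengths are precisely the origin of the variables $X_{\alpha,\tau}$ and the reason the coding delay is itself random in the finite-$F$ regime.

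Next I would organize the delivery phase by level $\alpha=1,\dots,K$, where a level-$\alpha$ subfile $W_{d_r,\mathcal{B}}$ is one cached by exactly the $\alpha-1$ users in $\mathcal{B}$ and requested by some user $r\notin\mathcal{B}$. Borrowing the ZF construction of \cite{Shariatpanahi_2016}, the $L$ transmitters combine the native coded-caching gain $\alpha$ with the multiplexing gain $L$, so that a single transmission at level $\alpha$ serves a group $\mathcal{U}$ of $\min(\alpha+L-1,K)$ users simultaneously, beamforming so that every unintended subfile is nulled at every user that must not decode it. Here I would verify the two combinatorial facts that generate the coefficient and the summation range in \eqref{Eq_Main_CodingDelay_finite}: each user owns exactly $\binom{K-1}{\alpha-1}$ level-$\alpha$ subfiles, and each fixed $\alpha$-set $\{r\}\cup\mathcal{B}$ is contained in exactly $\binom{K-\alpha}{\min(\alpha+L-1,K)-\alpha}$ admissible groups, out of the $\binom{K}{\min(\alpha+L-1,K)}$ possible ones.

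I would then identify the duration of one transmission. Since the subfiles multiplexed inside a group are zero-padded to the length of the longest one and each channel use advances all of them by one symbol, the per-transmission delay $X_{\alpha,\tau}$ is essentially the maximum of the Binomial subfile lengths from the first step, scaled by the per-slot delivery rate of the ZF precoder. Summing these durations over all levels $\alpha$ and all groups $\tau$, weighted by the multiplicity coefficient that records how often a shared subfile is reused across overlapping groups, reproduces exactly the expression in \eqref{Eq_Main_CodingDelay_finite}; the combinatorial bookkeeping here is routine once the counts above are in hand.

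I expect the genuine obstacle to be the last ingredient, namely the law of $X_{\alpha,\tau}$. Because it is a maximum of several \emph{correlated} $\mathrm{Binomial}(F,\cdot)$ variables, its exact distribution is intractable and admits no clean closed form. My plan is therefore to compute its first two moments exactly from the Binomial statistics and the order-statistic structure, and then to approximate $X_{\alpha,\tau}$ by the Gamma density $f(x;k,\theta)$ of the statement, tying the shape to the number of multiplexed subfiles and estimating the scale $\theta$ numerically, before checking the fit against Monte-Carlo simulation of the random placement. The hard part is justifying that this Gamma surrogate remains accurate uniformly across the full range of $\alpha$ and group sizes, rather than the straightforward counting of the preceding steps.
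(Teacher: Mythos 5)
Your overall architecture matches the paper's proof: i.i.d.\ bitwise placement with probability $M/N$ giving Binomial subfile sizes, ZF-aided delivery organized by the level $\alpha$ with groups of size $\min(\alpha+L-1,K)$, the subpacketization factor $\binom{K-\alpha}{\min(\alpha+L-1,K)-\alpha}$ in the denominator (each subfile $W_{d_r,\mathcal{U}\setminus\{r\}}$ is split across exactly that many groups $\mathcal{T}\supseteq\mathcal{U}$, which is the $c_u$ of Algorithm~\ref{alg:DMCC}), a per-group random duration $X_{\alpha,\tau}$ driven by the randomness of the placement, and a Gamma surrogate fitted numerically --- the paper likewise only justifies the Gamma choice empirically (Fig.~\ref{fig:fitData}), so your moment-matching plan is, if anything, more explicit than what the paper does.

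However, there is one genuine error: your identification of the numerator coefficient. You attribute it to the count $\binom{K-1}{\alpha-1}$ of level-$\alpha$ subfiles owned by each user (a reading invited by the typo ``$k$'' in the theorem statement), but the paper's Appendix makes clear the correct coefficient is $\binom{\min(\alpha+L-1,K)-1}{\alpha-1}$: it is the number of \emph{repeated transmissions} $\omega$ issued per group $\mathcal{T}$, with fresh random coefficients in $\varphi^{(\omega)}$, needed so that each user $k\in\mathcal{T}$ --- who after ZF nulling sees one linear equation per slot in its $\binom{|\mathcal{T}|-1}{\alpha-1}$ unknown coded mini-files $\{G_\omega(\mathcal{U}): k\in\mathcal{U}\subseteq\mathcal{T},\,|\mathcal{U}|=\alpha\}$ --- accumulates a full-rank solvable system (w.h.p.\ over a large field). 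This decodability argument, which you omit, is precisely what generates the numerator. The distinction is not cosmetic: the two binomials agree only in the saturated regime $\alpha+L-1\ge K$, and with your coefficient the formula fails to reduce to Corollary~\ref{Theorem_Performance_corollary}. For instance, with $K=4$, $L=2$, $\alpha=2$ the correct per-level weight is $\binom{4}{3}\binom{2}{1}/\binom{2}{1}=4=\binom{4}{2}\cdot\frac{2}{3}$, matching the factor $\binom{K}{\alpha}\frac{\alpha}{\min(\alpha+L-1,K)}$ in \eqref{Eq_Main_CodingDelay_infinite}, whereas your count gives $\binom{4}{3}\binom{3}{1}/\binom{2}{1}=6$. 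Your bookkeeping based on total per-user demands can recover the corollary in expectation (since $K\binom{K-1}{\alpha-1}=\alpha\binom{K}{\alpha}$), but it does not yield the per-group decomposition $\sum_\tau(\cdot)X_{\alpha,\tau}$ of the finite-$F$ theorem; for that you need the repetition-count interpretation.
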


	\begin{proof}
		Please refer to the Appendix in Section \ref{Append} .
	\end{proof}

Theorem \ref{Theorem_Performance} states the performance of the system in the finite file size regime, which results in a random coding delay, highlighted by the random variables $X_{\alpha , \tau }$. Following the concentration of measure approach used in \cite{Maddah_2015}, one can arrive at a much simpler result for the large file size regime, formally stated in the following corollary.
	\begin{corollary}
		\label{Theorem_Performance_corollary} In the limit of $F\rightarrow \infty$, Algorithm \ref{alg:DMCC} achieves the following coding delay (with high probability):\\
		\begin{align}
			\nonumber  T_C&=\\
			&\sum_{\alpha =1}^{K}{{K \choose \alpha}{{\left(\frac{M}{N}\right)}^{\alpha -1}}{{\left(1-\frac{M}{N}\right)}^{K-\alpha +1}}\frac{\alpha }{\min(\alpha +L-1,K)}}.
			\label{Eq_Main_CodingDelay_infinite}
		\end{align}
	\end{corollary}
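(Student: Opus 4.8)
The plan is to derive the deterministic law \eqref{Eq_Main_CodingDelay_infinite} from the finite-size expression \eqref{Eq_Main_CodingDelay_finite} of Theorem~\ref{Theorem_Performance} by letting $F\to\infty$ and replacing each random variable $X_{\alpha,\tau}$ by its almost-sure limit. The only randomness in \eqref{Eq_Main_CodingDelay_finite} comes from the decentralized placement: each of the $F$ symbols of a file is cached at each user independently with probability $M/N$, so the subfile $W_{d_r,\mathcal{S}}$ stored at exactly the users of a set $\mathcal{S}$ with $|\mathcal{S}|=j$ has size $|W_{d_r,\mathcal{S}}|\sim \mathrm{Binomial}\bigl(F,(M/N)^{j}(1-M/N)^{K-j}\bigr)$. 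Each $X_{\alpha,\tau}$ is a sub-round length of Algorithm~\ref{alg:DMCC}, and in the zero-forcing step it is governed by the longest among a fixed, finite collection of such subfiles, all cached by exactly $\alpha-1$ users.

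First I would invoke the concentration-of-measure argument of \cite{Maddah_2015}. A Chernoff bound gives, for a subfile cached by exactly $j$ users and every $\epsilon>0$,
\begin{align}
\Pr\!\left[\left|\frac{|W_{d_r,\mathcal{S}}|}{F}-\left(\frac{M}{N}\right)^{j}\left(1-\frac{M}{N}\right)^{K-j}\right|>\epsilon\right]\le 2\,e^{-c\epsilon^{2}F},
\end{align}
so the normalized subfile size converges to $(M/N)^{j}(1-M/N)^{K-j}$. Since $X_{\alpha,\tau}$ is, up to normalization, a maximum of finitely many such sizes that all share the common mean $(M/N)^{\alpha-1}(1-M/N)^{K-\alpha+1}$, its fluctuations are of order $1/\sqrt{F}$ and vanish in the limit; hence $X_{\alpha,\tau}$ concentrates on $(M/N)^{\alpha-1}(1-M/N)^{K-\alpha+1}$ with high probability. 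A union bound over the finitely many pairs $(\alpha,\tau)$ then yields joint concentration of the entire sum, so that with high probability every $X_{\alpha,\tau}$ in \eqref{Eq_Main_CodingDelay_finite} may be replaced by this common value.

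It remains to carry out the combinatorial bookkeeping. After the substitution, the $\alpha$-th term counts the channel uses needed to deliver, in round $\alpha$, all subfiles cached by $\alpha-1$ users: each user needs the $\binom{K-1}{\alpha-1}$ such subfiles of its requested file, giving $K\binom{K-1}{\alpha-1}$ subfile-demands, each of normalized size $(M/N)^{\alpha-1}(1-M/N)^{K-\alpha+1}$. Each channel use of the zero-forcing step delivers one fresh subfile to each of the $\min(\alpha+L-1,K)$ users served simultaneously, so round $\alpha$ contributes $\frac{K\binom{K-1}{\alpha-1}}{\min(\alpha+L-1,K)}$ normalized channel uses per unit of subfile size. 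Using $\alpha\binom{K}{\alpha}=K\binom{K-1}{\alpha-1}$, this equals $\binom{K}{\alpha}\frac{\alpha}{\min(\alpha+L-1,K)}$, which is precisely the $\alpha$-th summand of \eqref{Eq_Main_CodingDelay_infinite}. As a sanity check, for $L=1$ we have $\min(\alpha+L-1,K)=\alpha$, the factor $\frac{\alpha}{\min(\alpha+L-1,K)}=1$, and the sum collapses to $\frac{N-M}{M}\bigl[1-(1-M/N)^{K}\bigr]$, recovering the single-transmitter decentralized rate of \cite{Maddah_2015}.

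I expect the concentration step for the zero-forcing sub-rounds to be the main obstacle: because $X_{\alpha,\tau}$ is an extremum over several subfiles rather than a single one, one has to argue carefully that the extremal deviation is still $o(F)$ and, in particular, that the mean of the numerically fitted Gamma law of Theorem~\ref{Theorem_Performance} converges to the common normalized subfile size $(M/N)^{\alpha-1}(1-M/N)^{K-\alpha+1}$. Once this is established, the remaining binomial manipulation is routine.
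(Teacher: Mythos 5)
Your proposal is correct and follows essentially the same route as the paper: replace each $X_{\alpha,\tau}$ in the finite-$F$ expression of Theorem~\ref{Theorem_Performance} by the concentrated normalized subfile size $(M/N)^{\alpha-1}(1-M/N)^{K-\alpha+1}$ (the Maddah-Ali--Niesen concentration argument, which you actually spell out more carefully than the appendix does), and then simplify the resulting deterministic sum. Your only deviation is cosmetic: you obtain the summand $\binom{K}{\alpha}\frac{\alpha}{\min(\alpha+L-1,K)}$ by aggregate double counting via $\alpha\binom{K}{\alpha}=K\binom{K-1}{\alpha-1}$, whereas the appendix splits the sum at $\alpha=K-L+1$ according to the value of $\min(\alpha+L-1,K)$ and verifies the same identity by expanding the factorials.
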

	\begin{proof}
		Please refer to the Appendix in Section \ref{Append} .
	\end{proof}
	\begin{algorithm}[ht]
		
		\caption{Decentralized Multi-Transmitter Coded Caching }\label{alg:DMCC}
		
		\begin{algorithmic}[1]
			\STATE \textbf{Procedure PLACEMENT $(W_1 , W_2 , ... , W_N )$ }
			\FOR{  $ k \in [ K ] , n \in [ N ] $}
			\STATE user $k$ independently caches a subset of $ \frac{M}{N} F $ bits of file $ n $, chosen uniformly at random.
			\ENDFOR
			\STATE \textbf{end Procedore}
			
			\STATE \textbf{Procedure DELIVERY $(d_1 , d_2 , ... , d_K )$ }
			\STATE $ x \leftarrow 0 $
			\FOR{  $   \alpha  = K , K-1 , ... , 1  $}
			\STATE $ c_u  \leftarrow 1 $
			\FOR{  $ \mathcal{T} \subseteq [ K ] ,  \lvert \mathcal{T} \rvert = \min \{ \alpha + L - 1 , K \}$ }
			\FOR{$\omega = 1 , 2 , ... , \binom{ \min{(\alpha + L - 1,K)}-1 }{ \alpha - 1 } $}
			\STATE $ x \leftarrow x+1 $
			\FOR{$ \mathcal{U} \subseteq [ \mathcal{T} ] \hspace{3mm},\hspace{3mm} \lvert \mathcal{U} \rvert = \alpha  $}
			\STATE Design $ \boldsymbol{\psi} ^{ \mathcal{T}\setminus \mathcal{U} }  $ for all $ i \in \mathcal{T} \setminus \mathcal{U} :  \boldsymbol{h_i} \perp \boldsymbol{\psi}^{ \mathcal{T}\setminus \mathcal{U} } $
			\STATE $ G_{\omega} ( \mathcal{U} ) \leftarrow \varphi^{(\omega)}_{ r \in \mathcal{U} } \left( W^{c_u}_{d_{r} , \mathcal{U} \setminus \{r\}} \right) $
			\ENDFOR
			\STATE Server send  $ Y_x \leftarrow \sum_{\mathcal{U} \subseteq [ \mathcal{T} ] } G_{\omega} ( \mathcal{U} )  \boldsymbol{\psi}^{ \mathcal{T}\setminus \mathcal{U} } $
			\ENDFOR
			\STATE $ c_{u} \leftarrow c_{u} +1 $
			\ENDFOR
			\ENDFOR
			\STATE \textbf{end Procedore}

		\end{algorithmic}
		
	\end{algorithm}
	
	The formal proof of Theorem \ref{Theorem_Performance} and Corollary \ref{Theorem_Performance_corollary} is provided in the Appendix, but here we first discuss the implications of the results and then go through the main concepts behind the corresponding algorithm.  The only difference of \eqref{Eq_Main_CodingDelay_infinite} and the result of \cite{Maddah_2015} is that here in each term of the summation, we have a spatial multiplexing based delay reduction factor of $\frac{\alpha }{\min(\alpha +L-1,K)}$ which accounts for the benefits of adding multiple transmitters to increase the multicast group size for each transmission from $\alpha$ to $\min(\alpha +L-1,K)$. Therefore, it can easily be noted that when $L=1$, the coding delay in \eqref{Eq_Main_CodingDelay_infinite} reduces to the one in \cite{Maddah_2015}.

	The proposed algorithm consists of two main phases, in which the first phase is identical to the cache content placement introduced in \cite{Maddah_2015}, where each user randomly and independently of other users caches the amount of $\frac{M}{N}F$ bits of each file in their memory. In the delivery phase, we adapt the delivery algorithm proposed in \cite{Maddah_2015} to the multi-transmitter setup. Therefore, to respond to all user requests, we consider the parameter $\alpha$, such that for each $\alpha$ a common coded message can be helpful for $\min(\alpha+L-1,K)$ users, in which the global coded caching gain and multi-transmitter spatial multiplexing gain is included. The parameter $\alpha$ spans all possible values to benefit from all coded multicasting opportunities.

	In order to gain more insight into the above result, one can derive the following lower bound on $T_C$:
	\begin{align}
		\nonumber  T_C (L)\geq T_C(L=1)-\Delta T_C(L),
	\end{align}
	where $T_C(L=1)$ is the coding delay when we have only a single transmitter (the same as the result in \cite{Maddah_2015}), and
	\begin{align}
		\nonumber &\Delta T_C(L)\triangleq \\ \nonumber
		&(L-1) \sum_{\alpha =1}^{K}{{K \choose \alpha}{{\left(\frac{M}{N}\right)}^{\alpha -1}}{{\left(1-\frac{M}{N}\right)}^{K-\alpha +1}}\frac{1 }{\alpha +L-1}}
	\end{align}
	is the coding delay reduction factor due to deploying multiple transmitters (i.e., multiplexing gain).

	
	\section{Hybrid Cache Content Placement -- Limited Cooperation Setup}
	The main distinction between centralized and decentralized coded caching schemes is at the cache content placement phase. In the centralized setting, we assume full cooperation between the users, while in the decentralized setting no cooperation is assumed. In this section we propose an in-between setting which we call limited cooperation setup. Under this setup we consider $K$ users partitioned into two groups. The first group (group A) consists of $K_c$ users which can fully cooperate at the cache content placement phase, while the second group (group B) of $K_d=K-K_c$, do not cooperate at all and use completely random content placement strategy. In the delivery phase, the two groups merge together and request $K$ files from the server, and the total time needed for the server to satisfy the requests of all $K$ users is the corresponding coding delay.
	
As the baseline delivery strategy one can consider serving the two groups one after another, which we call the Hybrid-TDMA strategy. The total coding delay of this strategy will be
\begin{align}
		T_{TDMA}&= \\ \nonumber
			&\frac{{K}_{c}(1-\frac{M}{N})}{L+\frac{{K}_{c}M}{N}} \\ \nonumber
			+\sum\limits_{\alpha =1}^{{K}_{d}} & {K_d \choose \alpha} {\left(\frac{M}{N}\right)^{\alpha -1}} (1-\frac{M}{N})^{K_d-\alpha +1} \frac{\alpha }{\min(\alpha +L-1,K_d)}
\end{align}
In order to get more insight into the problem let us consider the single transmitter scenario. For the case of $L=1$, the above equation reduces to:
	\begin{align}
		\label{eq.hybrid_L=1}
		\frac{{K}_{c}\left(1-\frac{M}{N}\right)}{1+\frac{{K}_{c}M}{N}}+{K}_{d}\left(1-\frac{M}{N}\right) \frac{N}{{K}_{d}M}\left(1-{{\left(1-\frac{M}{N}\right)}^{{K}_{d}}}\right)
	\end{align}
By focusing on the low memory regime and using the Taylor series approach we can derive
	\begin{align}
		T_{Hybrid} \approx K-\frac{M}{N}\left(K_{c}^{2}+K+\begin{pmatrix}
			{{K}_{d}} \\ 
			2 \\ 
		\end{pmatrix} \right)
	\end{align}

	\begin{align}
		T_{Centralized} \approx K-\frac{M}{N} \left({{K}^{2}}+K \right)
	\end{align}
	
	\begin{align}
		T_{Decentralized} \approx K-\frac{M}{N}\left(K+ \begin{pmatrix}
			K \\ 
			2 \\ 
		\end{pmatrix} \right)
	\end{align}	
which clearly illustrate the caching gain as the reduction in the coding delay in all three scenarios, namely hybrid, centralized and decentralized settings. First, one can note that by setting $K_c=K$ and $K_d=K$, $T_{Hybrid}$ reduces to $T_{Centralized}$ and $T_{Decentralized}$, respectively. The main question is if the hybrid setting is always superior to the decentralized setting or not. One can see that this happens only when
\[K_{c}^{2}+ \begin{pmatrix}
		{{K}_{d}} \\ 
		2 \\ 
	\end{pmatrix} >  \begin{pmatrix}
		K \\ 
		2 \\ 
	\end{pmatrix}\]
is satisfied.

As an alternative approach to the baseline TDMA scheme discussed above, we use Algorithm  \ref{alg:DMCC} for the delivery phase of the Hybrid cache content placement scenario. The performance of this scheme is explored in the next section.

\section{Simulation Result}
In this section, we first illustrate the role of using multiple transmitters in reducing the coding delay in the decentralized, setting, then we investigate the proposed hybrid caching scenario, and finally justify our choice of Gamma distribution in Theorem \ref{Theorem_Performance}.

	\subsection{The Effect of the number of transmitters on the performance of decentralized coded caching systems}
	Figure \ref{fig:Effect-L-Finite-difrent M} shows the effect of the number of transmitters on the performance of decentralized coded caching systems in the case of finite and infinite file-size for different values of $M$. This figure clearly shows the benefits of the multiplexing gain provided by multiple transmitters in the decentralized setting, similar to the centralized setting. Also, this figure shows the price to pay for the limited subpacketization at the finite file size regime, in terms of coding delay. 

On the other hand, Figure \ref{fig:Effect-L-Finite-difrent F} plots the coding delay as a function of file size which clearly illustrates the convergence to the infinite file size result as $F$ increases.
	
	\begin{figure}[h]
		\includegraphics[width=9cm]{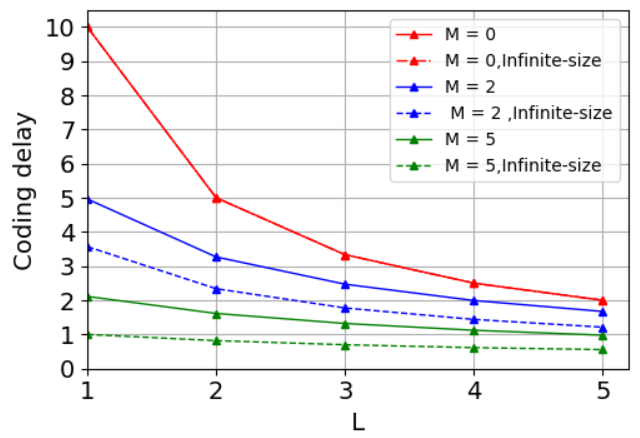}
		\caption{Investigating the effect number of transmitters on the performance of decentralized coded caching System with $K = 10$ users and $N = 10 $ files and different $M$ values and $F = 100 $}
		\label{fig:Effect-L-Finite-difrent M}
	\end{figure}
	
		\begin{figure}[h]
		\includegraphics[width=8cm]{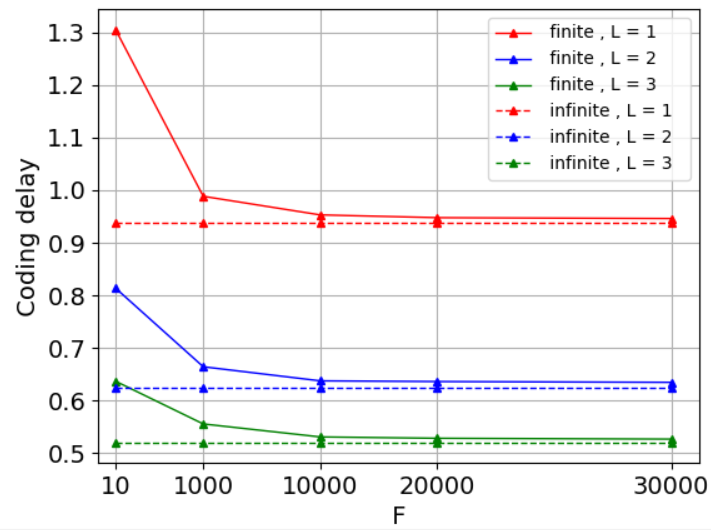}
		\caption{Investigating the effect of transmitters on the performance of decentralized coded caching System with $K =4$ users and $N = 4$ files and cache size $ M = 2 $ and different $F$ values}
		\label{fig:Effect-L-Finite-difrent F}
	\end{figure}

	\subsection{ Comparison between different caching strategies }
Here, we compare the coding delay for centralized, decentralized, and hybrid schemes for different parameters for the finite file size regime. In Fig. \ref{fig:hybrid1}, coding delay is plotted for these setups as a function of $M$, for $K=8$ users, and based on the TDMA scheme. For the hybrid scheme we assume two groups of $4$ users. As expected, all the three schemes' delays reduces by increasing $M$, and the centralized scheme has the lowest delay among all. However, interestingly, decentralized setup outperforms the hybrid scheme. In Fig. \ref{fig:hybrid2} we have the same parameters as in Fig. \ref{fig:hybrid1}, only changing the number of the second group in the hybrid scheme to $2$. Here we see that the hybrid scheme outperforms the decentralized scheme.	
	\begin{figure}[h]
		\includegraphics[width=8cm]{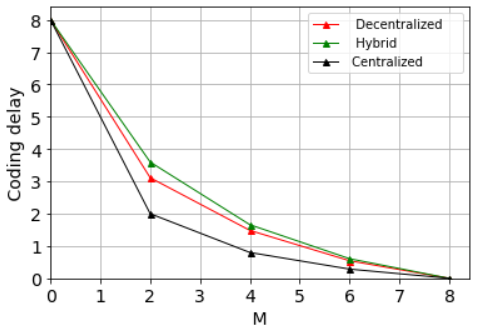}
		\caption{Comparison between the performance of the presented method in [3] and Algorithm 1 and Hybrid Strategy, for a system with $K = 8$ users $N = 8$ files and file size $F = 1000$, in this figure it is assumed that half of the users are members of group A and half of them are members of group B}
		\label{fig:hybrid1}
	\end{figure}
	
	\begin{figure}[h]
		\includegraphics[width=8cm]{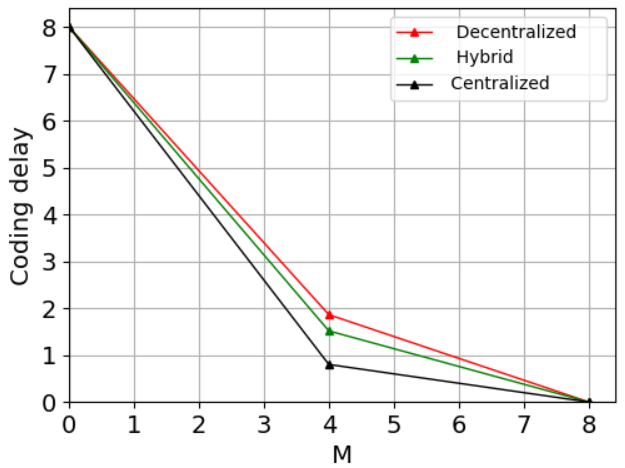}
		\caption{Comparison between the performance of the presented method in [3] and Algorithm 1 and Hybrid Strategy, for a system with $K = 8 $ users $N = 8 $ files and file-size $ F = 1000 $, in this figure it is assumed that about $75$ percent of users are members of group A and $25$ percent of them are members of group B}
		\label{fig:hybrid2}
	\end{figure}
	As Figures \ref{fig:hybrid1} and \ref{fig:hybrid2} show, our Hybrid Strategy method is somewhere between full centralized and full decentralized in terms of performance. Consequently, the hybrid strategy could perform better than full decentralized in some regimes. According to the simulation results, it is clear that the hybrid strategy's performance for caching systems depends on the number of users in groups A and B. So if more users are in group A, the hybrid strategy would have better performance, and its functionality will get closer to the complete centralized scheme.
	
	Finally, Figure \ref{fig:hybrid&pure}  plots coding delay of the hybrid scheme as a function of the number of users in the centralized group i.e., $K_c$ (for fixed $K=10$), which shows that the hybrid scheme outperforms the decentralized scheme only in certain regimes. In this figure, in addition to the baseline TDMA scheme, we have also included delivery coding delay based on Algorithm \ref{alg:DMCC} (with the label "Hybrid" in the figure).

	\begin{figure}[h]
		\includegraphics[width=9cm]{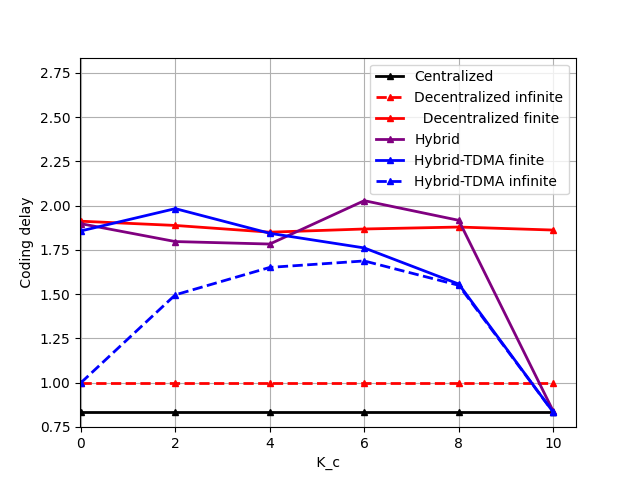}
		\caption{Comparison between hybrid  and pure strategies , for a system with $K = 10 $ users $N = 10 $ files, cache size $ M = 5 $ and file size $F = 1000$}
		\label{fig:hybrid&pure}
	\end{figure}

	\subsection{Fitting distribution to the data}
	\begin{figure}[htb]
		\includegraphics[width=8cm]{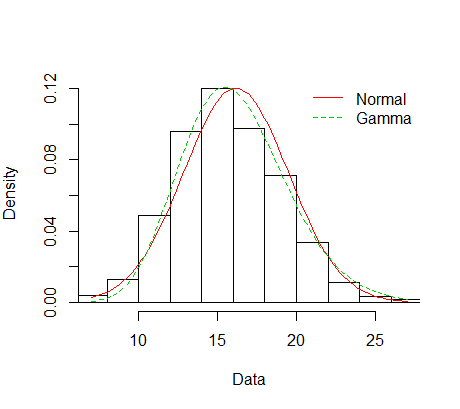}
		\caption{Fitting Gamma distribution to the data}
		\label{fig:fitData}
	\end{figure}
	
	As we said in Theorem \ref{Theorem_Performance}, the coding delay in the finite file size regime is a random variable and is a function of some other random variables named $ X_{\alpha , \tau } $. Here we justify our choice of Gamma distribution to approximate the distribution of $ X_{\alpha , \tau }$ by numerical data fit approach in Figure \ref{fig:fitData}.

	\section{Conclusion}
	In this paper we have considered downlink transmission in a multi-transmitter linear network where transmitters are connected to multiple cache-enabled end-users. We have introduced a new scheme that can be used for any type of caching strategy (centralized, decentralized, and hybrid). We examined the performance of this scheme specifically for the decentralized multi-transmitter coded caching systems by coding delay as the main metric. We also investigated a new practical scheme called hybrid strategy, which stands somewhere between full centralized and full decentralized. From simulation results, we have illustrated that the functionality of the decentralized scheme improves by adding transmitters, and the hybrid strategy performs better than the full decentralized algorithm, in certain regimes. For future work we look for more complex hybrid settings and investigating the performance of them.

	\section{Appendix \label{Append}}
	
	\subsection{Proof of Theorem \ref{Theorem_Performance}}
	
	\begin{proof}
		
		By analyzing the performance of Algorithm \ref{alg:DMCC}, we obtained the value of coding delay for a linear network with $L$ transmitters and $K$ users, and also we proved that the value of the coding delay obtained from the analysis of Algorithm \ref{alg:DMCC} is equal to the value expressed in Theorem \ref{Theorem_Performance}.\\
		Algorithm \ref{alg:DMCC} consists of two steps, placement and delivery. The first step is the same as \cite{Maddah_2015}, each user randomly and independently of other users caches  $\frac{M}{N}$F bits of each file in its memory. So the total memory size of each user is equal to : 
		\begin{center}
			$ \dfrac{M}{N} F\times N=MF$\\
		\end{center}
		
		Delivery phase: To respond to all users' requests, i.e.$({{d}_{1}},{{d}_{2}},...,{{d}_{k}})$, we consider the parameter $\alpha$, which refers to a certain number of users each time and can take values from $K$ to $1$, respectively. When $K$ = $\alpha$, the corresponding transmission is full multicast, and it is no longer possible to improve the coding delay. The related transmission is specified differently from the other ones, specified in line 11 of the Algorithm \ref{alg:DMCC}.
		
		For other values of $\alpha$, the arbitrary subset $\mathcal{T}$ from the users set is considered, the size of which is specified as follows:
		\begin{align}
			\left| \mathcal{T} \right|=\min(\alpha +L-1,K)
		\end{align}
		For a specific $\mathcal{T}$, all its $\alpha$ -member subsets are specified as follows:
		\begin{align}
			{\mathcal{U}_{i}}=\begin{pmatrix}
				\alpha +L-1 \\ 
				\alpha  \\ 
			\end{pmatrix} \hspace{3mm} , \hspace{2mm}i=1,2,..., \\
			\nonumber
			\vspace{3mm}
		\end{align}
		\begin{align}
			\nonumber
			( \hspace{2mm}\mathcal{U}\subseteq \mathcal{T},\left| \mathcal{U} \right|=\alpha \hspace{2mm})\\
		\end{align}
		For each given set of $\mathcal{T}$ and the corresponding $\mathcal{U}_{i}$ we design $\boldsymbol{h}^{\mathcal{T}\backslash {\mathcal{U}_{i}}}$ the vector in such a way that :
		\begin{align}
			{\mathbf{h}_{K}} \hspace{2mm}\bot\hspace{2mm} \boldsymbol{h}^{\mathcal{T}\backslash {\mathcal{U}_{i}}} \hspace{2mm} if \hspace{2mm}  K\in \mathcal{T}\backslash {\mathcal{U}_{i}}
			\label{eq.h-k}
		\end{align}
		Then we define for each $\mathcal{U}_{i}$:
		\begin{align}
			G(\mathcal{U})={{\varphi }_{r\in \mathcal{U}}}(W_{{{d}_{r}},\mathcal{U}\backslash \{r\}}^{{{c}_{u}}})
			\label{eq.G(U)}
		\end{align}
		
		\begin{align}
			{{c}_{u}}=\begin{pmatrix}
				K-\left| \mathcal{U} \right| \\ 
				\left| \mathcal{T} \right|-\left| \mathcal{U} \right| \\ 
			\end{pmatrix}
		\end{align}
		$W_{{{d}_{r}},\mathcal{U}\backslash \{r\}}^{{{c}_{u}}}$ represents the piece of required file for user $r$ which is cached in the memory of all members of $\mathcal{U}$ except user $r$, also ${c}_{u}$ shows that each subfile is divided into several mini files. In this equation, the $\varphi $ operator represents linear combinations with random coefficients from the file components. As a result, the transmitted vector corresponding to the set $\mathcal{T}$ is defined using equation \eqref{eq.h-k} and \eqref{eq.G(U)} as follows:
		
		\begin{align}
			Y=\sum\nolimits_{\mathcal{U}\subseteq \mathcal{T}}{G(\mathcal{U})\boldsymbol{h}^{\mathcal{T}\backslash \mathcal{U}}}\\
			\nonumber
		\end{align}
		This transmission will be useful for all members of $\mathcal{T}$. Each user can exploit the desired parts of the file from each transmission. For each set of $\mathcal{T}$, the corresponding transmission vector is repeated $\omega =\begin{pmatrix}
			\min (\alpha +L-1,K)-1 \\ 
			\alpha -1 \\ 
		\end{pmatrix}$ times with different linear coefficients.\\
		Thus G represents linear combinations of file fragments with random coefficients, which are likely to be independent of each other, the final equation of transmissions corresponding to any subset of
		$ \mathcal{T} \hspace{2mm},\left| \mathcal{T} \right|=\min(\alpha +L-1,K)$  is obtained as :\\
		
		\begin{align}
			{{G}_{\omega }}(\mathcal{U})=\varphi _{r\in \mathcal{U}}^{\omega }(W_{{{d}_{r}},\mathcal{U}\backslash \{r\}}^{{{c}_{u}}})\\
			\vspace{3mm}
			\label{eq.coded_messege}
			\nonumber
		\end{align}
		
		\begin{align}
			{{Y}_{\omega }}(\mathcal{T})=\sum\nolimits_{\mathcal{U}\subseteq \mathcal{T}}{{{G}_{\omega }}(\mathcal{U})\boldsymbol{h}^{\mathcal{T}\backslash \mathcal{U}}}\\
			\nonumber
		\end{align}
		As a result, for a specific subset $\mathcal{T}$, transmitters will send the following block:
		\begin{align}
			[\hspace{2mm}{{Y}_{1}}(\mathcal{T})\hspace{1mm},\hspace{1mm}{{Y}_{2}}(\mathcal{T})\hspace{1mm},...\hspace{1mm},\hspace{1mm}{{Y}_{\omega }}(\mathcal{T})\hspace{2mm}]\\
			\vspace{5mm}
			\nonumber
		\end{align}
		\begin{align}
			\hspace{2mm} \omega=1,2,...,\begin{pmatrix}
				\min (\alpha +L-1,K)-1 \\ 
				\alpha -1 \\ 
			\end{pmatrix}\
		\end{align}
		In equation \eqref{eq.coded_messege}, $G$ represents the linear combination of the file components whose coefficients are generated by the $\varphi$ operator and benefited by the users of set $\mathcal{T}$. The transmission corresponding to the intended set $\mathcal{T}$ would be beneficial for all its users. By using these transmissions, members of the set $\mathcal{T}$ will be able to receive some of the requested parts of the file. After the above process is fully implemented for the specified subset $\mathcal{T}$ , we repeat the same process for all subsets of $\mathcal{T}$. Eventually, all users will receive their requested files completely.\\
		Regarding the provided proof for the correctness of the content delivery strategy of Algorithm \ref{alg:DMCC}, we now calculate the corresponding coding delay. For a specific subset $\mathcal{T}$, each transmitted vector ${Y}_{\omega}$ has the following size :
		\begin{align}
			\frac{{{\Big(\dfrac{M}{N}\Big)}^{\alpha -1}}{{\Big(1-\dfrac{M}{N}\Big)}^{K-\alpha +1}}}{\begin{pmatrix}
					K-\alpha  \\ 
					\min (\alpha +L-1,K)-\alpha  \\ 
			\end{pmatrix}}
		\end{align}
		, which is the result of \cite{Maddah_2015}. So the sent block corresponding to the subset
		$ \mathcal{T} $
		has the size equal to  :
		\vspace{4mm}
		\begin{align}
		\begin{pmatrix}
			\min (\alpha +L-1,K)-1 \\ 
			\alpha -1 \\ 
		\end{pmatrix} \times \frac{{{\Big(\frac{M}{N}\Big)}^{\alpha -1}}{{\Big(1-\frac{M}{N}\Big)}^{K-\alpha +1}}}{ \begin{pmatrix}
				K-\alpha  \\ 
				\min (\alpha +L-1,K)-\alpha  \\ 
		\end{pmatrix}}
		\hspace{1cm}
		\end{align}

		\vspace{4mm}
		and since each subset $\mathcal{T}$, $\left| \mathcal{T} \right|= \min(\alpha +L-1,K)$  also has such a sent size, so the total amount of coding delay will be equal to :\\
		\begin{center}
			\begin{align}
				\label{eq.T(M).finite}
				T(M)=\sum\nolimits_{\alpha =1}^{K}\sum\nolimits_{\lvert \tau\rvert =1}^{z} 
				\dfrac{\begin{pmatrix}
						\min (\alpha +L-1,K)-1 \\ 
						\alpha -1 \\ 
				\end{pmatrix}}{\binom{K-\alpha}{\min(\alpha +L-1,K) -\alpha }}X_{\alpha , \tau } =
			\end{align}
		\end{center}
		\begin{center}
			\begin{align}
				\nonumber
				\sum_{\alpha =1}^{K-L+1}\sum_{ \tau \subseteq [K] , \lvert \tau\rvert =1}^{\begin{pmatrix}
						K  \\ 
						\alpha +L -1 \\ 
				\end{pmatrix}} 
				\dfrac{\begin{pmatrix}
						K -1 \\ 
						\alpha -1 \\ 
				\end{pmatrix}}{\binom{K-1}{\alpha +L-2  }}X_{\alpha , \tau }\hspace{2mm}  + 
			\end{align}
		\end{center}
		\begin{center}
			\begin{align}
				\nonumber
				\sum_{\alpha =K-L+2}^{K}\sum_{ \tau \subseteq [K] , \lvert \tau\rvert =1}^{\begin{pmatrix}
						K  \\ 
						K \\ 
				\end{pmatrix}} 
				\begin{pmatrix}
					K -1 \\ 
					\alpha -1 \\ 
				\end{pmatrix}X_{\alpha , \tau }   
			\end{align}
		$z =\begin{pmatrix}
			K \\ 
			\min (\alpha +L-1,K) \\ 
		\end{pmatrix}$
		\end{center}
		\vspace{2mm}
		\subsection{Proof of Corrallary \ref{Theorem_Performance_corollary}}
		For infinite file size total coding delay will be equal to :\\
		\begin{align}
			\label{eq.T(M).Infinite}
			T(M)=\sum\nolimits_{\alpha =1}^{K}\begin{pmatrix}
				K \\ \min(\alpha +L-1,K)
			\end{pmatrix}\times
		\end{align}
		\vspace{2mm}
		\begin{align}
			\nonumber
			\dfrac{1}{\binom{K-\alpha}{\min(\alpha +L-1,K) -\alpha }} \times \Big(\frac{M}{N}\Big)^{\alpha -1} \Big( 1-\frac{M}{N} \Big) ^{K-\alpha +1}  
		\end{align}
		\vspace{2mm}
		\begin{center}
			$ \times \begin{pmatrix}
				\min (\alpha +L-1,K)-1 \\ 
				\alpha -1 \\ 
			\end{pmatrix}$
		\end{center}
	
	In the above equations,$\begin{pmatrix}
		K \\ 
		\min (\alpha +L-1,K) \\ 
	\end{pmatrix}$
	represents the number of $\mathcal{T}$ subsets from the users' sets. we show that equation \eqref{eq.T(M).Infinite} is equal to the value expressed in corrollary \ref{Theorem_Performance_corollary}.\\
	
	To simplify the proof process, we divide equation  \eqref{eq.T(M).Infinite} and \eqref{Eq_Main_CodingDelay_infinite} into two parts according to $\min (\alpha +L-1,K)$ .\\
	By rewriting equation \eqref{Eq_Main_CodingDelay_infinite} :\\
	
	\begin{center}
		$T_C= $ \\
		\vspace{3mm}
		$\sum\nolimits_{\alpha =1}^{K}\dbinom{K}{ \alpha  }	\Big(\dfrac{M}{N} \Big)^{\alpha - 1} \Big( 1 - \dfrac{M}{N} \Big)^{ K - \alpha + 1} $ \\ 
		\vspace{4mm}
		$\times \dfrac{\alpha}{\min \Big( \alpha + L - 1 , K \Big)} = $\\
		\vspace{4mm}
		$  \sum\nolimits_{\alpha =1}^{K-L + 1}\dbinom{K}{ \alpha }	\Big(\dfrac{M}{N} \Big)^{\alpha - 1} \Big( 1 - \dfrac{M}{N} \Big)^{ K - \alpha + 1}  $\\
		\vspace{4mm}
		$ \times \dfrac{\alpha}{\alpha + L - 1 } \hspace{4mm} + $ \\
		\vspace{4mm}
		$  \sum\nolimits_{\alpha = K - L  + 2 }^{ K }\dbinom{K}{\alpha}\Big(\dfrac{M}{N} \Big)^{\alpha - 1} \Big( 1 - \dfrac{M}{N} \Big)^{ K - \alpha + 1}  $\\
		\vspace{4mm}
		$ \times \dfrac{\alpha}{K } $ \\
	\end{center}
	
	By rewriting equation \eqref{eq.T(M).Infinite} :
	\begin{center}
		$T(M)= $ \\
		\vspace{4mm}
		$\sum\nolimits_{\alpha =1}^{K} \begin{pmatrix}
			K \\ 
			\min(\alpha + L - 1 , K ) \\ 
		\end{pmatrix} \dfrac{\Big(\dfrac{M}{N} \Big)^{\alpha - 1 } \Big( 1 - \dfrac{M}{N} \Big) ^{K - \alpha + 1}}{{\begin{pmatrix}
					K-\alpha  \\ 
					\min (\alpha +L-1,K)-\alpha  \\ 
		\end{pmatrix}}}$\\
		\vspace{4mm}
		$\times \begin{pmatrix}
			\min (\alpha +L-1,K)-1 \\ 
			\alpha -1 \\ 
		\end{pmatrix} =	$\\
		\vspace{4mm}
		$ \sum\nolimits_{\alpha =1}^{K-L+1} 
		\begin{pmatrix}
			K \\ 
			\alpha + L - 1 \\ 
		\end{pmatrix} \dfrac{\Big(\frac{M}{N}\Big)^{\alpha -1}\Big(1-\frac{M}{N}\Big)^{K-\alpha +1}}{\begin{pmatrix}
				K-\alpha  \\ 
				L-1 \\ 
		\end{pmatrix}} $ \\
		\vspace{4mm}
		$ \times \begin{pmatrix}
			\alpha +L-2 \\ 
			\alpha -1 \\ 
		\end{pmatrix} \hspace{3mm} + $ \\
		\vspace{4mm}
		$ \sum\nolimits_{\alpha =K-L+2}^{K} \hspace{3mm}
		\begin{pmatrix}
			K \\ 
			K \\ 
		\end{pmatrix} \dfrac{\Big(\frac{M}{N}\Big)^{\alpha -1}\Big(1-\frac{M}{N}\Big)^{K-\alpha +1}}{\begin{pmatrix}
				K-\alpha  \\ 
				K-\alpha  \\ 
		\end{pmatrix}} \times \begin{pmatrix}
			K-1 \\ 
			\alpha -1 \\ 
		\end{pmatrix}=$\\
		\vspace{4mm}
		\vspace{4mm}
		$ \sum\nolimits_{\alpha =1}^{K-L+1}
		\dfrac{K!\times \alpha}{(K-\alpha)! \times \alpha! \times (\alpha+L-1)}\times$\\
		\vspace{4mm}
		$\Big(\frac{M}{N}\Big)^{\alpha -1}\Big(1-\frac{M}{N}\Big)^{K-\alpha +1}+$\\
		\vspace{4mm}
		$ \sum\nolimits_{\alpha =K-L+2}^{K}
		\dfrac{K!\times \alpha}{(K-\alpha)! \times \alpha! \times (K-\alpha)!}\times$\\
		\vspace{4mm}
		$\Big(\frac{M}{N}\Big)^{\alpha -1}\Big(1-\frac{M}{N}\Big)^{K-\alpha +1}=$\\
		\vspace{4mm}
		$  \sum\nolimits_{\alpha =1}^{K-L + 1}\dbinom{K}{ \alpha }	\Big(\dfrac{M}{N} \Big)^{\alpha - 1} \Big( 1 - \dfrac{M}{N} \Big)^{ K - \alpha + 1}  $\\
		\vspace{4mm}
		$ \times \dfrac{\alpha}{\alpha + L - 1 } \hspace{4mm} + $ \\
		\vspace{4mm}
		$  \sum\nolimits_{\alpha = K - L  + 2 }^{ K }\dbinom{K}{\alpha}\Big(\dfrac{M}{N} \Big)^{\alpha - 1} \Big( 1 - \dfrac{M}{N} \Big)^{ K - \alpha + 1} \times \dfrac{\alpha}{K}=T_C  $\\
		
	\end{center}
	
	\vspace{4mm}

	Therefore, the equality of the two equations has been proved. \\
	
\end{proof}

\end{document}